\newtheorem{thm}{Theorem}[section]
\numberwithin{equation}{section}
\numberwithin{figure}{section}
\title{New variational and multisymplectic formulations of the Euler-Poincar\'e equation on the Virasoro-Bott group using the inverse map}
\date{}
\author[1]{Darryl D. Holm\thanks{\texttt{d.holm@imperial.ac.uk}}}
\author[1,2]{Tomasz M. Tyranowski\thanks{\texttt{tomasz.tyranowski@ipp.mpg.de}}}
\affil[1]{\small Mathematics Department\authorcr Imperial College London \authorcr London SW7 2AZ, UK}
\affil[2]{\small Max-Planck-Institut f\"ur Plasmaphysik \authorcr Boltzmannstra{\ss}e 2, 85748 Garching, Germany }
\begin{document}

\maketitle

\begin{abstract}
We derive a new variational principle, leading to a new momentum map and a new multisymplectic formulation for a family of Euler--Poincar\'e equations defined on the Virasoro-Bott group, by using the inverse map (also called `back-to-labels' map). This family contains as special cases the well-known Korteweg-de Vries, Camassa-Holm, and Hunter-Saxton soliton equations. In the conclusion section, we sketch opportunities for future work that would apply the new Clebsch momentum map with $2$-cocycles derived here to investigate a new type of interplay among nonlinearity, dispersion and noise. 
\end{abstract}

\section{Introduction}
\label{sec:intro}

The family of equations

\begin{equation}
\label{eq: General family of equations}
\alpha (u_t+3 u u_x)-\beta (u_{xxt}+2 u_x u_{xx}+u u_{xxx}) + a u_{xxx}=0,
\end{equation}

\noindent
where $a$, $\alpha$, $\beta$ are real nonnegative parameters, was introduced in \cite{KhesinMisiolek2003} as the geodesic flow dynamics associated to a variety of right-invariant metrics on the Virasoro-Bott group (see also \cite{Khesin2008Geometry}, \cite{MisiolekKdV}). Various well-known completely integrable soliton equations are special cases of \eqref{eq: General family of equations}. For example, when $\alpha=1$ and $\beta=0$, then equation \eqref{eq: General family of equations} specialises to the Korteweg-de Vries equation (\cite{KortewegDeVries}, \cite{DrazinSolitonsIntroduction})

\begin{equation}
\label{eq: KdV equation}
u_t + 3 u u_x + a u_{xxx}=0\,;
\end{equation}

\noindent
whereas for $\alpha=\beta=1$ one obtains the Camassa-Holm equation (\cite{CamassaHolm}, \cite{CamassaHolmHyman}, \cite{HolmIvanovReview})

\begin{equation}
\label{eq: C-H equation}
u_t-u_{xxt}+3 u u_x-2 u_x u_{xx}-u u_{xxx} + a u_{xxx}=0\,;
\end{equation}

\noindent
and for $\alpha=0$ and $\beta=1$ one finds the Hunter-Saxton equation (\cite{HunterSaxton1991}, \cite{HunterZheng1994}),

\begin{equation}
\label{eq: H-S equation}
u_{xxt}+2 u_x u_{xx}+u u_{xxx}- a u_{xxx}=0.
\end{equation}

The aim of this paper is to derive a new canonical variational principle for the family of equations \eqref{eq: General family of equations}, and thereby determine its new multisymplectic formulation. By doing so, we obtain unified variational and multisymplectic characterizations of all three of the well-known integrable soliton equations, KdV, CH, and HS, which are subcases of the general family of equations in \eqref{eq: General family of equations}.

Variational principles have proved extremely useful in the study of nonlinear evolution PDEs. For instance, they often provide physical insights into the problem being  considered; facilitate discovery of conserved quantities by relating them to symmetries via Noether's theorem; allow one to determine approximate solutions to PDEs by minimizing the action functional over a class of test functions (see, e.g., \cite{Cooper1993}); and provide a way to construct a class of numerical methods called variational integrators (see \cite{MarsdenPatrickShkoller}, \cite{MarsdenWestVarInt}). A canonical variational principle for the KdV equation expressed in terms of the velocity potential was first proposed by Whitham \cite{WhithamVariational}; see also \cite{Cooper1993}, \cite{DrazinSolitonsIntroduction}, \cite{Khater2003}, \cite{MarsdenRatiuSymmetry}. In fact, there is an infinite family of such Lagrangians, as shown by Nutku \cite{Nutku2001}. Two canonical variational principles for the dispersionless CH equation ($a=0$) were introduced in \cite{CotterHolmMultisymplectic} and \cite{KouranbaevaShkoller}. Two variational structures are also known for the HS equation with $a=0$ (see \cite{AliHunter2009}, \cite{HunterSaxton1991}, \cite{HunterZheng1994}).

Multisymplectic structures of Hamiltonian PDEs were first considered by Bridges \cite{BridgesMultisymplectic} as a natural generalization of the symplectic structure of Hamiltonian ODEs. Among other applications, the multisymplectic formalism is useful for, e.g., the stability analysis of water waves (see \cite{BridgesMultisymplectic}, \cite{BridgesDerks1999}) and construction of a class of numerical methods known as multisymplectic integrators (see \cite{BridgesReichMultisymplectic}, \cite{MarsdenPatrickShkoller}). It has been  observed in the literature that, as for symplectic integrators for Hamiltonian ODEs, multisymplectic integrators demonstrate superior performance in capturing long time dynamics of PDEs (see \cite{MooreReichBEA}). To the best of our knowledge, only one multisymplectic formulation of the KdV equation has been considered so far (see \cite{BridgesDerks1999}, \cite{ZhaoQin2000}). Four different multisymplectic formulations are known for the dispersionless CH equation (see \cite{CohenMultisymplecticCH}, \cite{CotterHolmMultisymplectic}, \cite{KouranbaevaShkoller}). Two multisymplectic structures for the HS equation with $a=0$ were described in \cite{MiyatakeCohen2017}. 

\paragraph{Main content}
The main content of the remainder of this paper is, as follows. 
\begin{description}
\item
In Section~\ref{sec:The inverse map and Clebsch representation} we review the Euler-Poincar\'e theory on the Virasoro-Bott group and then construct a new canonical variational principle in terms of the inverse map. The main result of this section is Theorem~\ref{thm: Elimination theorem}, which provides the Clebsch variational principle for Euler-Poincar\'e equations on the dual of the Virasoro-Bott algebra. The variational equations yield the new Clebsch momentum map $T^*\widehat{\text{Diff}}(S^1)\to \mathfrak{vir}^*$ in \eqref{momap} associated with particle relabeling by cotangent-lifted right actions of $\widehat{\text{Diff}}(S^1)$ that include the Bott $2$-cocycle. Section~\ref{sec:The inverse map and Clebsch representation} also introduces the simplified Clebsch variational principle \eqref{eq: Simplified Clebsch action functional}, which yields the special family of equations in \eqref{eq: General family of equations} as its Euler-Lagrange equation.
\item
In Section~\ref{sec:Inverse map multisymplectic formulation} we use the Clebsch representation based on the inverse flow map to derive the multisymplectic form formula associated with our variational principle. We then deduce a new multisymplectic formulation of the family of equations \eqref{eq: General family of equations}. The main result of this section is Theorem~\ref{thm: Multisymplectic PDE system}, which derives the multisymplectic formulation based on the  inverse flow map. 
\item
Section~\ref{sec:Summary} contains the summary of the present work and a discussion of several new directions for research that it reveals.
\end{description}

\section{The inverse map and Clebsch representation}
\label{sec:The inverse map and Clebsch representation}

Equation~\eqref{eq: General family of equations} was first introduced in the Lie-Poisson context (see \cite{KhesinMisiolek2003}, \cite{Khesin2008Geometry}, \cite{MisiolekKdV}). In this section we take the Lagrangian point of view, instead, and formulate \eqref{eq: General family of equations} as an Euler-Poincar\'e equation on the Virasoro-Bott group. For this, we construct a canonical variational principle that will later allow us to determine a multisymplectic formulation of \eqref{eq: General family of equations}.

\subsection{Euler-Poincar\'e equation on the Virasoro-Bott group}
\label{sec: Euler-Poincare equation on the Virasoro-Bott group}

Let $S^1=\mathbb{R}/2\pi \mathbb{Z}=\{\theta \in [0,2\pi)\}$ denote the circle group, and let $\text{Diff}(S^1)$ be the diffeomorphism group of $S^1$. The tangent bundles can be identified as $TS^1 = S^1 \times \mathbb{R}$ and $T\text{Diff}(S^1) = \text{Diff}(S^1) \times \mathfrak{X}(S^1)$, where $\mathfrak{X}(S^1)=\{ \chi: S^1 \longrightarrow \mathbb{R}\}$ is the set of all smooth vector fields on $S^1$. In particular, the Lie algebra of $S^1$ is $\mathbb{R}$, and the Lie algebra of $\text{Diff}(S^1)$ is $\mathfrak{X}(S^1)$. The Virasoro-Bott group is the central extension $\widehat{\text{Diff}}(S^1)=\text{Diff}(S^1) \times S^1$ with the group operation

\begin{equation}
\label{eq: Virasoro-Bott group operation}
(\psi_1, \theta_1)\cdot(\psi_2, \theta_2) = (\psi_1\circ \psi_2,B(\psi_1,\psi_2)+\theta_1+\theta_2),
\end{equation}

\noindent
where the 2-cocycle $B(\psi_1,\psi_2)$ is given by

\begin{equation}
\label{eq: Bott cocycle}
B(\psi_1,\psi_2) = \frac{1}{2}\int_{S^1} \log \frac{\partial (\psi_1\circ \psi_2)}{\partial x}\,d\log \frac{\partial \psi_2}{\partial x}.
\end{equation}

\noindent
The tangent bundle of the Virasoro-Bott group is $T\widehat{\text{Diff}}(S^1) = \widehat{\text{Diff}}(S^1) \times \mathfrak{X}(S^1) \times \mathbb{R}$. The Virasoro algebra $\mathfrak{vir}$ is the Lie algebra of the Virasoro-Bott group and it can be identified as $\mathfrak{vir} = \mathfrak{X}(S^1) \times \mathbb{R}$. The Lie algebra bracket (or adjoint action) on $\mathfrak{vir}$ is given by

\begin{equation}
\label{eq: Lie algebra bracket}
\text{ad}_{(u,a)}(v,b) = \big[(u,a),(v,b)\big] = \bigg( -u v_x + u_x v, \int_{S^1} u_x v_{xx}\,dx \bigg)
\end{equation}

\noindent
for $(u,a), (v,b) \in \mathfrak{vir}$. One identifies the dual of $\mathfrak{vir}$ with itself, for the $L^2$ inner product

\begin{equation}
\label{eq: L2 inner product}
\big\langle (u,a), (v,b) \big\rangle = ab+\int_{S^1} u v\,dx.
\end{equation}

\noindent
With respect to this inner product, the coadjoint action $\text{ad}^*_{(u,a)}: \mathfrak{vir} \longrightarrow \mathfrak{vir}$ can be represented as

\begin{equation}
\label{eq: Coadjoint action}
\text{ad}^*_{(u,a)} (v,b) = (2 v u_x + u v_x + b u_{xxx},0).
\end{equation}

\noindent
For more information about the Virasoro-Bott group and the Virasoro algebra we refer the reader to \cite{Khesin2008Geometry} and \cite{MarsdenRatiuSymmetry}.

Suppose a Lagrangian system is defined on $T\widehat{\text{Diff}}(S^1)$ by specifying the right-invariant Lagrangian $L: T\widehat{\text{Diff}}(S^1) \longrightarrow \mathbb{R}$. Rather then on the full tangent bundle, the dynamics of such a system can be analyzed on the Lie algebra $\mathfrak{vir}$ via the process called Euler-Poincar\'e reduction (see \cite{HoMaRaAIM1998}, \cite{HolmGMS}, \cite{MarsdenRatiuSymmetry}). We consider the reduced Lagrangian $\ell: \mathfrak{vir}\longrightarrow \mathbb{R}$ defined by $\ell(u,a) = L(\text{id},0,u,a)$ and the reduced variational principle

\begin{equation}
\label{eq: E-P variational principle}
\delta \int_{t_1}^{t_2} \ell\big( u(t),a(t) \big)\,dt =0,
\end{equation} 

\noindent
using variations of the form $\delta (u,a) = \frac{d}{dt} (v,b) - \big[(u,a),(v,b)\big]$, where $(v(t),b(t))$ vanish at the endpoints, and the time derivative of a Virasoro algebra-valued function of time is understood as $\frac{d}{dt} (v(t),b(t)) = \big(\frac{\partial v}{\partial t}(\cdot,t), \frac{db}{dt}(t)\big)$. This variational principle leads to the \emph{Euler-Poincar\'e equation},

\begin{equation}
\label{eq: E-P equation}
\frac{d}{dt} \frac{\delta \ell}{\delta (u,a)} + \text{ad}^*_{(u,a)} \frac{\delta \ell}{\delta (u,a)} = 0,
\end{equation}

\noindent
where the variational derivatives and the coadjoint action are computed with respect to the inner product \eqref{eq: L2 inner product}. Below, we demonstrate that \eqref{eq: General family of equations} can be written as an Euler-Poincar\'e equation.

\begin{thm}
\label{thm: The EP form of the family of equations}
Let the reduced Lagrangian be defined as

\begin{equation}
\label{eq: Reduced Lagrangian for the family of equations}
\ell(u,a) = \frac{1}{2}a^2 + \frac{1}{2} \int_{S^1}\big(\alpha u^2 + \beta u_x^2 \big)\,dx,
\end{equation}

\noindent
where $\alpha,\beta\geq 0$. Then the corresponding Euler-Poincar\'e equations take the form

\begin{align}
\label{eq: E-P form of the family of equations}
\alpha (u_t+3 u u_x)-\beta (u_{xxt}+2 u_x u_{xx}+u u_{xxx}) + a u_{xxx}=0,
\quad\,\,\hbox{and}\,\,\quad
\frac{da}{dt}=0.
\end{align}
\end{thm}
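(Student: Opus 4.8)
The plan is to carry out the Euler--Poincar\'e reduction directly: compute the variational derivative $\delta\ell/\delta(u,a)$ with respect to the pairing \eqref{eq: L2 inner product}, substitute it into the Euler--Poincar\'e equation \eqref{eq: E-P equation} together with the coadjoint formula \eqref{eq: Coadjoint action}, and then expand and collect terms to recover \eqref{eq: E-P form of the family of equations}.

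First I would compute the functional derivative. Taking the first variation of $\ell$ along $(\delta u,\delta a)$ produces $a\,\delta a + \int_{S^1}\big(\alpha u\,\delta u + \beta u_x\,\delta u_x\big)\,dx$. Integrating the $\beta u_x\,\delta u_x$ term by parts and using periodicity on $S^1$ to discard the boundary contribution converts it into $-\int_{S^1}\beta u_{xx}\,\delta u\,dx$. Matching the result against $\big\langle(m,c),(\delta u,\delta a)\big\rangle = c\,\delta a + \int_{S^1} m\,\delta u\,dx$ then identifies
\[
\frac{\delta\ell}{\delta(u,a)} = \big(\alpha u - \beta u_{xx},\, a\big).
\]

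Next I would substitute $(v,b) = (\alpha u - \beta u_{xx},\, a)$ into the coadjoint action \eqref{eq: Coadjoint action}. Because the second (central) component of $\text{ad}^*_{(u,a)}$ is identically zero, the central slot of \eqref{eq: E-P equation} collapses to $da/dt = 0$, which is the second equation of \eqref{eq: E-P form of the family of equations}. The first slot becomes
\[
\frac{d}{dt}\big(\alpha u - \beta u_{xx}\big) + 2\big(\alpha u - \beta u_{xx}\big)u_x + u\big(\alpha u - \beta u_{xx}\big)_x + a\,u_{xxx} = 0,
\]
where the final term $a\,u_{xxx}$ is precisely the dispersive contribution fed by the central charge $b=a$ through the $b\,u_{xxx}$ term in \eqref{eq: Coadjoint action}.

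Finally I would expand this expression (using the convention $\frac{d}{dt}(\alpha u - \beta u_{xx}) = \alpha u_t - \beta u_{xxt}$) and collect like terms: the advection contributions combine as $2\alpha u u_x + \alpha u u_x = 3\alpha u u_x$, while the $\beta$ terms assemble into $-\beta\big(u_{xxt} + 2u_x u_{xx} + u u_{xxx}\big)$, yielding exactly \eqref{eq: General family of equations}. I do not anticipate a genuine obstacle here, since the argument is a direct computation; the only points demanding care are the integration by parts that produces the $-\beta u_{xx}$ in the momentum (so that the metric term $\beta u_x^2$ is responsible for the Camassa--Holm and Hunter--Saxton structure), and the bookkeeping ensuring that the central variable $a$ enters only as the coefficient of the dispersive term $u_{xxx}$ and is itself conserved.
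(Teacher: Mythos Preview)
Your proof is correct and carries out explicitly the ``straightforward'' computation that the paper only alludes to: the paper's own proof simply cites \cite{MarsdenRatiuSymmetry} for the special case $\alpha=1$, $\beta=0$ and asserts that the general $\alpha,\beta\ge 0$ case follows by the same argument. Your direct calculation of $\delta\ell/\delta(u,a)=(\alpha u-\beta u_{xx},a)$ followed by substitution into \eqref{eq: E-P equation} with \eqref{eq: Coadjoint action} is exactly that argument written out in full.
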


\begin{proof}
The case $\alpha=1$ and $\beta=0$ is shown in \cite{MarsdenRatiuSymmetry}. The case $\alpha,\beta\geq 0$ is a straightforward generalization.
\end{proof}

\noindent
The first equation in \eqref{eq: E-P form of the family of equations} is equivalent to \eqref{eq: General family of equations}; since the second equation in \eqref{eq: E-P form of the family of equations} implies $a=\text{const}$.

\subsection{Reconstruction equations and the inverse map}
\label{sec: Reconstruction equations and the inverse map}

A solution $(u(t),a(t))$ of \eqref{eq: E-P equation} describes the evolution of the (right-invariant) Lagrangian system in the Virasoro algebra, denoted $\mathfrak{vir}$. One can reconstruct the evolution on the whole Virasoro-Bott group by finding a curve $(\psi(t),\theta(t)) \in \widehat{\text{Diff}}(S^1)$ which right-translates its tangent vector back to $(u(t),a(t))$, i.e., in short-hand notation $(u(t),a(t)) =(\dot \psi(t),\dot \theta(t))\cdot(\psi(t),\theta(t))^{-1}$. More precisely,

\begin{equation}
\label{eq: Right translation back to the Lie algebra}
\big(u(t),a(t)\big) \cong \big(\text{id},0,u(t),a(t)\big) = T_{(\psi(t),\theta(t))}R_{(\psi^{-1}(t),-\theta(t))}\cdot \big(\psi(t),\theta(t),\dot \psi(t),\dot \theta(t)\big),
\end{equation}

\noindent
where $R$ denotes right translation on the Virasoro-Bott group and $TR$ its tangent lift (see \cite{HolmGMS}, \cite{MarsdenRatiuSymmetry}). By using \eqref{eq: Virasoro-Bott group operation} and \eqref{eq: Bott cocycle}, we obtain the reconstruction equations

\begin{align}
\label{eq: Reconstruction equations}
\begin{split}
u(t) &= \dot \psi(t)\circ \psi^{-1}(t),  \\
a(t) &= \dot \theta(t) + \frac{d}{ds}\bigg|_{s=t} B\big(\psi(s),\psi^{-1}(t)\big).
\end{split}
\end{align}

In the context of fluid dynamics, a time-dependent diffeomorphism $\psi(t) \in \text{Diff}(S^1)$ maps a given reference configuration to the fluid domain at each instant of time, i.e., $\psi(X,t)$ represents the position at time $t$ of the fluid particle labeled by $X$. On the other hand, the inverse map $l(t)=\psi^{-1}(t)$ maps from the current configuration of the fluid to the reference configuration, i.e., $l(x,t)$ is the label of the fluid particle occupying the position $x$ at time $t$. The Eulerian velocity field $u(x,t)$ gives the velocity of the fluid particle that occupies the position $x$ at time $t$, i.e., $\dot \psi(X,t) = u(\psi(X,t),t)$. This is precisely the meaning of the first of the reconstruction equations in \eqref{eq: Reconstruction equations}. It will be convenient for us to rewrite the reconstruction equations in terms of the inverse map. One can check via a straightforward calculation that the first equation in \eqref{eq: Reconstruction equations} is equivalent to

\begin{equation}
\label{eq: Advection equation for the inverse map}
l_t + u l_x = 0.
\end{equation}

\noindent
Using the definition of the 2-cocycle \eqref{eq: Bott cocycle}, we further calculate

\begin{align}
\label{eq: Derivative of the cocycle term}
\frac{d}{ds}\bigg|_{s=t} B\big(\psi(s),\psi^{-1}(t)\big)&=\frac{1}{2}\int_{S^1} \frac{\partial \big(\dot \psi(t)\circ \psi^{-1}(t)\big)}{\partial x}\,d\log\frac{\partial \psi^{-1}(t)}{\partial x} \nonumber\\
&=\frac{1}{2}\int_{S^1} u_x\,d\log l_x \\
&= \frac{1}{2}\int_{S^1} \frac{u_x l_{xx}}{l_x}\,dx, \nonumber
\end{align}

\noindent
where in deriving the second equality we have used the first reconstruction equation in \eqref{eq: Reconstruction equations} and the definition of the inverse map. Therefore, the reconstruction equations in terms of the inverse map take the forms

\begin{align}
\label{eq: Reconstruction equations in terms of the inverse map}
l_t + u l_x = 0,
\,\,\quad\hbox{and}\quad\,\,
a(t) = \dot \theta(t) + \frac{1}{2}\int_{S^1} \frac{u_x l_{xx}}{l_x}\,dx.
\end{align}

\noindent
Thus, given a solution $(u(t),a(t))$ of \eqref{eq: E-P equation}, one can easily solve \eqref{eq: Reconstruction equations in terms of the inverse map} for $l(x,t)$ and $\theta(t)$.

\subsection{Clebsch variational principle}
\label{sec: Clebsch variational principle}

\subsubsection{General reduced Lagrangian}
\label{sec: General reduced Lagrangian}

As discussed in Section~\ref{sec: Euler-Poincare equation on the Virasoro-Bott group}, Equation~\eqref{eq: General family of equations} has an underlying variational structure. However, the Euler-Poincar\'e variational principle \eqref{eq: E-P variational principle} imposes constraints on the variations of the functions $u$ and $a$, which may be inconvenient in some applications, for instance, when one is interested in deriving variational integrators, or determining the underlying multisymplectic structure, as is our goal in this work. One can circumvent this issue by considering an augmented action functional which includes the reconstruction equations as constraints. This idea was formalized in the context of variational Lie group integrators in back-to-back papers in \cite{BouRabeeHP} and \cite{CotterHolm-ClebschInt}. The idea of using the inverse map $l(x,t)$ (also called `back-to-labels' map) and the advection condition \eqref{eq: Advection equation for the inverse map} appeared in \cite{Holm1983,HoKuLe1983}, and was later used in \cite{CotterHolmMultisymplectic} to construct multisymplectic formulations of a class of fluid dynamics equations. We extend these ideas to systems defined on the Virasoro-Bott group.

The Clebsch variational principle (also sometimes called the Hamilton-Pontryagin principle) enforces stationarity of the action $S=\int \ell(u,a)\,dt$ under the constraint that the reconstruction equations \eqref{eq: Reconstruction equations in terms of the inverse map} are satisfied. We define the augmented action functional

\begin{equation}
\label{eq: Clebsch action functional}
S[u,a,l,\theta,\pi,\lambda] = \int_{t_1}^{t_2}\ell(u,a)\,dt + \int_{t_1}^{t_2} \int_{S^1} \pi(l_t+ul_x)\,dxdt + \int_{t_1}^{t_2} \lambda \bigg(\dot \theta - a + \frac{1}{2} \int_{S^1} \frac{u_x l_{xx}}{l_x}\,dx \bigg)\,dt,
\end{equation}

\noindent
where $\pi=\pi(x,t)$ and $\lambda=\lambda(t)$ are Lagrange multipliers, and consider the variational principle

\begin{equation}
\label{eq: Clebsch variational principle}
\delta S = 0,
\end{equation}

\noindent
with respect to arbitrary variations $\delta u$, $\delta a$, $\delta \pi$, $\delta \lambda$, and vanishing endpoint variations $\delta l$ and $\delta \theta$, i.e., $\delta l(x,t_1) = \delta l(x, t_2) = \delta \theta(t_1) = \delta \theta(t_2) =0$. The resulting variational equations are

\begin{subequations}
\label{eq: E-L equations for the Clebsch action functional}
\begin{align}
\label{eq: E-L equations for the Clebsch action functional 1}
\delta \theta&: \quad \dot \lambda = 0, \\
\label{eq: E-L equations for the Clebsch action functional 2}
\delta a&:      \quad  \frac{\partial \ell(u,a)}{\partial a} = \lambda, \\
\label{eq: E-L equations for the Clebsch action functional 3}
\delta \lambda&:\quad \dot \theta = a - \frac{1}{2} \int_{S^1} \frac{u_x l_{xx}}{l_x}\,dx,\\
\label{eq: E-L equations for the Clebsch action functional 4}
\delta u&:      \quad \frac{\delta \ell(u,a)}{\delta u}  = -\pi l_x +  \frac{\lambda}{2} \frac{\partial}{\partial x} \frac{l_{xx}}{l_x},\\
\label{eq: E-L equations for the Clebsch action functional 5}
\delta \pi&:    \quad l_t + u l_x =0, \\
\label{eq: E-L equations for the Clebsch action functional 6}
\delta l&:      \quad \pi_t + \frac{\partial}{\partial x} \bigg(\pi u -\frac{\lambda}{2}  \frac{u_{xx}}{l_x}  \bigg)=0\,.
\end{align}
\end{subequations}

\noindent
Consequently, we obtain the components of the Clebsch momentum map, given by
\begin{equation}
\label{momap}
\frac{\delta \ell}{\delta (u,a)} = \left(\frac{\delta \ell}{\delta u},\frac{\partial \ell}{\partial a}\right)
= \left( -\,\pi l_x + \frac{\lambda}{2}  \frac{\partial}{\partial x}  \frac{l_{xx}}{l_x}\,,\,\lambda\right)
\,.
\end{equation}

\paragraph{Remark.} Equation \eqref{momap} is the Clebsch momentum map $T^*\widehat{\text{Diff}}(S^1)\to \mathfrak{vir}^*$ associated with particle relabeling by cotangent-lifted right actions of $\widehat{\text{Diff}}(S^1)$ that include the Bott $2$-cocycle in equation \eqref{eq: Bott cocycle}.\\

We will now show that the dynamics generated by the system \eqref{eq: E-L equations for the Clebsch action functional} is equivalent to the dynamics generated by the Euler-Poincar\'e equation \eqref{eq: E-P equation}.

\begin{thm}
\label{thm: Elimination theorem}
Suppose the functions $u(x,t)$, $a(t)$, $l(x,t)$, $\theta(t)$, $\pi(x,t)$, and $\lambda(t)$ satisfy the Euler-Lagrange equations \eqref{eq: E-L equations for the Clebsch action functional}. Then the functions $u(x,t)$ and $a(t)$ satisfy the Euler-Poincar\'e equation \eqref{eq: E-P equation}.
\end{thm}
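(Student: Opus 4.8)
The plan is to verify directly that the momentum map \eqref{momap} satisfies both components of the Euler-Poincar\'e equation \eqref{eq: E-P equation}. Writing $m := \delta\ell/\delta u$ and $n := \partial\ell/\partial a$ and reading off the coadjoint action from \eqref{eq: Coadjoint action}, equation \eqref{eq: E-P equation} is equivalent to the pair
\begin{align*}
m_t + u m_x + 2 m u_x + n\, u_{xxx} = 0, \qquad \dot n = 0.
\end{align*}
The second of these is immediate: equation \eqref{eq: E-L equations for the Clebsch action functional 2} gives $n = \partial\ell/\partial a = \lambda$, and \eqref{eq: E-L equations for the Clebsch action functional 1} gives $\dot\lambda = 0$. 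Thus the whole content of the theorem lies in the first (scalar PDE) equation, now with $m = -\pi l_x + \tfrac{\lambda}{2}\,\partial_x(l_{xx}/l_x)$ and $n = \lambda$ constant.

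To organise the first equation I would split the momentum into its advective and cocycle parts, $m = P + Q$ with $P := -\pi l_x$ and $Q := \tfrac{\lambda}{2}\,\partial_x(l_{xx}/l_x)$, and treat each separately. For $P$, I would differentiate in time and substitute the evolution equations \eqref{eq: E-L equations for the Clebsch action functional 5} for $l$ and \eqref{eq: E-L equations for the Clebsch action functional 6} for $\pi$, using $l_{xt} = -(u l_x)_x$. After collecting terms, every contribution arising purely from advection cancels in the combination $P_t + u P_x + 2 P u_x$, leaving only the source term produced by the cocycle correction in \eqref{eq: E-L equations for the Clebsch action functional 6}; explicitly I expect
\begin{align*}
P_t + u P_x + 2 P u_x = -\frac{\lambda}{2}\, l_x \Big(\frac{u_{xx}}{l_x}\Big)_x.
\end{align*}
It then remains to show that this residual, the corresponding expression for $Q$, and the dispersive term $\lambda u_{xxx}$ together vanish.

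For the cocycle part it is cleanest to introduce $w := l_{xx}/l_x = \partial_x \log l_x$, so that $Q = \tfrac{\lambda}{2}\,w_x$ with $\lambda$ constant. The key computation is to differentiate the advection law \eqref{eq: Advection equation for the inverse map} to obtain $\partial_t \log l_x = -u_x - u w$, and hence the transport identity $w_t + u w_x = -u_{xx} - u_x w$. Differentiating this once more in $x$ gives the evolution of $w_x$, and substituting into $Q_t + u Q_x + 2 Q u_x$ all terms cancel except $-\tfrac{\lambda}{2} u_{xx} w$. Using the identity $l_x(u_{xx}/l_x)_x = u_{xxx} - u_{xx}\, l_{xx}/l_x = u_{xxx} - u_{xx} w$, one checks that the remaining pieces combine as
\begin{align*}
\Big(Q_t + u Q_x + 2 Q u_x\Big) + \lambda u_{xxx} = \frac{\lambda}{2}\, l_x \Big(\frac{u_{xx}}{l_x}\Big)_x,
\end{align*}
which exactly cancels the residual from $P$, yielding $m_t + u m_x + 2 m u_x + \lambda u_{xxx}=0$ and completing the first Euler-Poincar\'e equation.

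The individual differentiations are straightforward but error-prone; the main obstacle is the careful bookkeeping in the cocycle part, and in particular establishing the transport identity $w_t + u w_x = -u_{xx} - u_x w$ for $w = \partial_x \log l_x$ and tracking the cancellations after one further $x$-derivative. Once that identity is in hand, the remaining algebra is mechanical, and the advective and cocycle residuals annihilate each other together with $\lambda u_{xxx}$, recovering \eqref{eq: E-P equation}.
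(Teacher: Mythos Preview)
Your argument is correct and follows the same direct-verification strategy as the paper, but with a cleaner organisation. The paper pairs $\tfrac{d}{dt}\frac{\delta\ell}{\delta(u,a)}$ and $\mathrm{ad}^*_{(u,a)}\frac{\delta\ell}{\delta(u,a)}$ against an arbitrary $(w,c)\in\mathfrak{vir}$, substitutes the Euler--Lagrange equations, and then dismisses the resulting cancellation as ``tedious, albeit straightforward algebraic manipulations''; your split $m=P+Q$ together with the auxiliary variable $w=\partial_x\log l_x$ and its transport identity $w_t+uw_x=-u_{xx}-u_xw$ makes those cancellations transparent and is a genuine expository improvement.

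One small slip: your sentence ``all terms cancel except $-\tfrac{\lambda}{2}u_{xx}w$'' is not quite right. Carrying out the differentiation gives
\[
Q_t + uQ_x + 2Qu_x \;=\; -\tfrac{\lambda}{2}\,u_{xxx} \;-\; \tfrac{\lambda}{2}\,u_{xx}\,w,
\]
with the extra $-\tfrac{\lambda}{2}u_{xxx}$ present. This is in fact consistent with your subsequent displayed identity $(Q_t+uQ_x+2Qu_x)+\lambda u_{xxx}=\tfrac{\lambda}{2}\,l_x(u_{xx}/l_x)_x$, which is correct and closes the argument; but the intermediate claim as written would produce $\lambda u_{xxx}$ rather than $\tfrac{\lambda}{2}u_{xxx}$ on the right. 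Correcting that one line makes the proof complete.
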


\begin{proof}
Let $(w,c)$ be an arbitrary element of the Virasoro algebra $\mathfrak{vir}$. Let us calculate

\begin{equation}
\bigg \langle\frac{d}{dt} \frac{\delta \ell}{\delta (u,a)}, (w,c) \bigg \rangle = \int_{S^1} \bigg(\frac{\partial}{\partial t} \frac{\delta \ell}{\delta u}\bigg)\cdot w\,dx + \bigg(\frac{\partial}{\partial t} \frac{\partial \ell}{\partial a}\bigg)\cdot c,
\end{equation}

\noindent
where the inner product $\langle \cdot,\cdot \rangle$ was defined in \eqref{eq: L2 inner product}. By using \eqref{eq: E-L equations for the Clebsch action functional 1}, \eqref{eq: E-L equations for the Clebsch action functional 2}, and \eqref{eq: E-L equations for the Clebsch action functional 3}, we further have

\begin{align}
\bigg \langle\frac{d}{dt} \frac{\delta \ell}{\delta (u,a)}, (w,c) \bigg \rangle &= \int_{S^1} \frac{\partial}{\partial t} \bigg(\frac{1}{2} \lambda \frac{\partial}{\partial x} \frac{l_{xx}}{l_x}-\pi l_x \bigg)\cdot w\,dx \nonumber\\
&= \int_{S^1} \bigg(\frac{1}{2} \lambda \frac{\partial}{\partial x} \frac{l_{txx} l_x-l_{xx}l_{tx}}{l_x^2}-\pi_t l_x-\pi l_{tx} \bigg)\cdot w\,dx.
\end{align}

\noindent
We now use \eqref{eq: E-L equations for the Clebsch action functional 5} and \eqref{eq: E-L equations for the Clebsch action functional 6} to eliminate the time derivatives in the integrand, which yields

\begin{align}
\bigg \langle\frac{d}{dt} \frac{\delta \ell}{\delta (u,a)}, (w,c) \bigg \rangle 
&= \int_{S^1} \underbrace{\bigg[\frac{1}{2} \lambda \frac{\partial}{\partial x} \frac{-l_x \frac{\partial^2}{\partial x^2}(u l_x)+l_{xx}\frac{\partial}{\partial x}(u l_x)}{l_x^2}+ l_x\frac{\partial}{\partial x} \bigg(\pi u -\frac{1}{2} \lambda \frac{u_{xx}}{l_x}  \bigg)+\pi \frac{\partial}{\partial x}(u l_x) \bigg]}_{A}\cdot w\,dx.
\end{align}

\noindent
Note that the expression $A$ contains the functions $u$, $l$, $\pi$, $\lambda$, and their spatial derivatives. On the other hand we have

\begin{align}
\bigg \langle \text{ad}^*_{(u,a)} \frac{\delta \ell}{\delta (u,a)}, (w,c) \bigg \rangle 
&= \int_{S^1} \bigg[2 \frac{\delta \ell}{\delta u} u_x + u \frac{\partial}{\partial x} \frac{\delta \ell}{\delta u} + \frac{\partial \ell}{\partial a} u_{xxx}  \bigg]\cdot w\,dx \nonumber \\
&= \int_{S^1} \underbrace{\bigg[\bigg(\lambda \frac{\partial}{\partial x} \frac{l_{xx}}{l_x} - 2 \pi l_x\bigg) u_x + u \frac{\partial}{\partial x} \bigg( \frac{1}{2} \lambda \frac{\partial}{\partial x} \frac{l_{xx}}{l_x} -\pi l_x\bigg) + \lambda u_{xxx}  \bigg]}_{B}\cdot \,w\,dx,
\end{align}

\noindent
where in the first equality we used \eqref{eq: Coadjoint action}, and in the second equality we used \eqref{eq: E-L equations for the Clebsch action functional 2} and \eqref{eq: E-L equations for the Clebsch action functional 4}. Note that the expression $B$ contains the functions $u$, $l$, $\pi$, $\lambda$, and their spatial derivatives. After rather tedious, albeit straightforward algebraic manipulations we find that $A+B=0$. Therefore, we have that for all $(w,c)\in \mathfrak{vir}$

\begin{equation}
\bigg \langle \frac{d}{dt} \frac{\delta \ell}{\delta (u,a)}+\text{ad}^*_{(u,a)} \frac{\delta \ell}{\delta (u,a)}, (w,c) \bigg \rangle =0,
\end{equation}

\noindent
which completes the proof, since the inner product is nondegenerate.\\
\end{proof}

\subsubsection{Separable reduced Lagrangian}
\label{sec: Separable reduced Lagrangian}

The variational principle \eqref{eq: Clebsch variational principle} simplifies significantly when one considers separable Lagrangians of the form

\begin{equation}
\label{eq: Separable Lagrangian}
\ell(u,a) = \frac{1}{2}a^2 + \bar \ell(u).
\end{equation}

\noindent
In that case Equations \eqref{eq: E-L equations for the Clebsch action functional 1} and \eqref{eq: E-L equations for the Clebsch action functional 2} imply $\lambda = a = \text{const}$. Treating $a$ as a constant, we can eliminate the variables $\theta$ and $\lambda$ from the action functional \eqref{eq: Clebsch action functional}. Consider the action functional

\begin{equation}
\label{eq: Simplified Clebsch action functional}
S[u,l,\pi] = \int_{t_1}^{t_2}\bigg(\bar \ell(u) + \frac{a}{2} \int_{S^1} \frac{u_x l_{xx}}{l_x}\,dx \bigg)\,dt + \int_{t_1}^{t_2} \int_{S^1} \pi(l_t+ul_x)\,dxdt.
\end{equation}

\noindent
The stationarity condition $\delta S = 0$ with respect to arbitrary variations $\delta u$, $\delta \pi$, and vanishing endpoint variations $\delta l$, yields the variational equations

\begin{subequations}
\label{eq: E-L equations for the simplified Clebsch action functional}
\begin{align}
\label{eq: E-L equations for the simplified Clebsch action functional 1}
\delta u&:      \quad \frac{\delta \bar \ell(u)}{\delta u}  = -\, \pi l_x +  \frac{a}{2} \frac{\partial}{\partial x} \frac{l_{xx}}{l_x},\\
\label{eq: E-L equations for the simplified Clebsch action functional 2}
\delta \pi&:    \quad l_t + u l_x =0, \\
\label{eq: E-L equations for the simplified Clebsch action functional 3}
\delta l&:      \quad \pi_t + \frac{\partial}{\partial x} \bigg(\pi u -\frac{a}{2} \frac{u_{xx}}{l_x}  \bigg)=0.
\end{align}
\end{subequations}

\noindent
It is straightforward to see that the system \eqref{eq: E-L equations for the Clebsch action functional} reduces to \eqref{eq: E-L equations for the simplified Clebsch action functional} for Lagrangians of the form \eqref{eq: Separable Lagrangian}. 

\paragraph{Remark.} The action functional \eqref{eq: Simplified Clebsch action functional} provides a new variational formulation for Equation~\eqref{eq: General family of equations} when the Lagrangian \eqref{eq: Reduced Lagrangian for the family of equations} is considered. For $a=0$ this action functional reduces to the action functional for the dispersionless CH equation ($\alpha=\beta=1$) introduced in \cite{CotterHolmMultisymplectic} and one of the action functionals for the HS equation ($\alpha=0$ and $\beta=1$) described in \cite{HunterZheng1994}. For $\alpha=1$ and $\beta=a=0$ we also obtain a variational principle for the inviscid Burgers' equation.

\section{Inverse map multisymplectic formulation}
\label{sec:Inverse map multisymplectic formulation}

The action functional and variational principle introduced in Section~\ref{sec: Separable reduced Lagrangian} allow the identification and analysis of a new multisymplectic formulation of the family of equations \eqref{eq: General family of equations}. Multisymplectic geometry provides a covariant formalism for the study of field theories in which time and space are treated on equal footing. Multisymplectic formalism is useful for, e.g., the stability analysis of water waves (see \cite{BridgesMultisymplectic}, \cite{BridgesDerks1999}) or construction of structure-preserving numerical algorithms (see \cite{BridgesReichMultisymplectic}, \cite{MarsdenPatrickShkoller}). The multisymplectic form formula was first proved by Marsden \& Patrick \& Shkoller \cite{MarsdenPatrickShkoller} and provides an intrinsic and covariant description of the conservation of symplecticity law, first introduced by Bridges \cite{BridgesMultisymplectic} in the context of multisymplectic Hamiltonian PDEs. In Section~\ref{eq: Multisymplectic form formula and conservation of symplecticity} we review the multisymplectic geometry formalism and derive the multisymplectic form formula associated with \eqref{eq: Simplified Clebsch action functional}. We further make a connection with Bridges' approach to multisymplecticity in Section~\ref{eq: Multisymplectic Hamiltonian PDE formulation} and determine a multisymplectic Hamiltonian form of the Euler-Lagrange equations \eqref{eq: E-L equations for the simplified Clebsch action functional}.

\subsection{Multisymplectic form formula and conservation of symplecticity}
\label{eq: Multisymplectic form formula and conservation of symplecticity}

The multisymplectic form formula is the multisymplectic counterpart of the fact that in finite-dimensional mechanics, the flow of a mechanical system consists of symplectic maps. It was first proved for first-order field theories in \cite{MarsdenPatrickShkoller}, and later generalized to second-order field theories in \cite{KouranbaevaShkoller}. Since the field theory described by the action functional \eqref{eq: Simplified Clebsch action functional} with the Lagrangian \eqref{eq: Reduced Lagrangian for the family of equations} is second-order, we follow the theory developed in \cite{KouranbaevaShkoller}. For the convenience of the reader, below we briefly review multisymplectic geometry and jet bundle formalism necessary for our discussion. 

Let $X = S^1 \times \mathbb{R}$ represent spacetime and denote the local coordinates by $(x^\mu)=(x^1,x^0)$, where $x^1 \equiv x$ is the spatial coordinate and $x^0 \equiv t$ is time. Define the configuration fiber bundle $\tau_{XY}:Y\longrightarrow X$ as $Y = X \times S^1 \times \mathbb{R} \times \mathbb{R}$. Denote the fiber coordinates by $(y^A) = (y^1, y^2, y^3)$ with $y^1 \equiv l$, $y^2 \equiv u$, and $y^3 \equiv \pi$. Physical fields are sections of the configuration bundle, that is, continuous maps $\phi: X \longrightarrow Y$ such that $\tau_{XY} \circ \phi = \text{id}_X$. In the coordinates $(x^\mu, y^A)$ a field $\phi$ is represented as $\phi(x,t) =(x^\mu, \phi^A(x^\mu))= (x,t,l(x,t),u(x,t),\pi(x,t))$.

For a $k$-th order field theory, the evolution of the field takes place on the $k$-th jet bundle $J^k Y$. The first jet bundle $J^1 Y$ is the affine bundle over $Y$ with the fibers $J^1_y Y$ defined as

\begin{equation}
\label{eq: Definition of the first jet bundle}
J^1_y Y = \Big\{\vartheta: T_{(x,t)} X\rightarrow T_y Y \,\,\big|\,\, T\tau_{XY}\circ \vartheta = \mathrm{id}_{T_{(x,t)}X} \Big\}
\end{equation}

\noindent
for $y\in Y_{(x,t)}$, where the linear maps $\vartheta$ represent the tangent mappings $T_{(x,t)}\phi$ for local sections $\phi$ such that $\phi(x,t)=y$. The local coordinates $(x^\mu,y^A)$ on $Y$ induce the coordinates $(x^\mu,y^A, v^A_{\phantom{A}\mu})$ on $J^1 Y$. Intuitively, the first jet bundle consists of the configuration bundle $Y$, and of the first partial derivatives of the field variables with respect to the independent variables. We can think of $J^1 Y$ as a fiber bundle over $X$. Given a section $\phi:X\longrightarrow Y$, we can define its first jet prolongation 

\begin{equation}
\label{eq: First jet prolongation intrinsic}
j^1 \phi: X \ni (x,t) \longrightarrow T_{(x,t)}\phi \in J^1 Y,
\end{equation} 

\noindent
in coordinates given by 

\begin{equation}
\label{eq: First jet prolongation in coordinates}
j^1\phi(x^\mu)=\bigg(x^\mu,\phi^A(x^\nu),\frac{\partial \phi^A(x^\nu)}{\partial x^\mu}\bigg),
\end{equation}

\noindent
which is a section of the fiber bundle $J^1 Y$ over $X$. For higher-order field theories we consider higher-order jet bundles, defined iteratively by $J^{k+1} Y = J^1(J^k Y)$. We denote the local coordinates on $J^2 Y$ by $(x^\mu,y^A, v^A_{\phantom{a}\mu},w^A_{\phantom{a}\mu\nu})$. The second jet prolongation $j^2\phi:X\longrightarrow J^2 Y$ is given in coordinates by $j^2 \phi(x^\mu)=(x^\mu,\phi^A,\partial \phi^A/\partial x^\mu,\partial^2 \phi^A/\partial x^\mu \partial x^\nu)$. Let $(x^\mu,y^A, v^A_{\phantom{a}\mu},w^A_{\phantom{a}\mu\nu},s^A_{\phantom{a}\mu\nu\sigma})$ denote the coordinates on $J^3 Y$. The third jet prolongation $j^3 \phi$ is defined similar to $j^1 \phi$ and $j^2 \phi$. For more information about the geometry of jet bundles see \cite{Saunders} and \cite{GIMMSY}.

In the jet bundle formalism introduced above, the action functional \eqref{eq: Simplified Clebsch action functional} with the reduced Lagrangian \eqref{eq: Reduced Lagrangian for the family of equations} can be written as 

\begin{equation}
\label{eq: Action functional for field theory}
S[\phi]=\int_\mathcal{U} \mathcal{L}(j^2 \phi)\,d^2x,
\end{equation}

\noindent 
where $\mathcal{U}=S^1\times [t_1,t_2]$, $d^2 x = dx \wedge dt$, and the Lagrangian density $\mathcal{L}:J^2 Y \longrightarrow \mathbb{R}$ is

\begin{equation}
\label{eq: Lagrangian density}
\mathcal{L}(x^\mu,y^A, v^A_{\phantom{a}\mu},w^A_{\phantom{a}\mu\nu}) = \frac{\alpha}{2} (y^2)^2+\frac{\beta}{2} (v^2_{\phantom{a}1})^2+ \frac{a}{2} \frac{v^2_{\phantom{a}1} w^1_{\phantom{a}11}}{v^1_{\phantom{a}1}} + y^3 (v^1_{\phantom{a}0} + y^2 v^1_{\phantom{a}1}).  
\end{equation}

\noindent
Hamilton's variational principle seeks fields $\phi(x,t)$ that extremize $S$, that is,

\begin{equation}
\label{eq:HamiltonPrincipleForFields}
\frac{d}{d\lambda} \bigg|_{\lambda=0} S[\eta_Y^\lambda \circ \phi]=0,
\end{equation}

\noindent
for all $\eta_Y^\lambda$ that keep the boundary conditions on $\partial \mathcal{U}$ fixed, where $\eta_Y^\lambda:Y\longrightarrow Y$ is the flow of a vertical vector field $V$ on $Y$. This leads to the Euler-Lagrange equations

\begin{equation}
\label{eq:Euler-Lagrange Equations for Fields}
\frac{\partial\mathcal{L}}{\partial y^A}(j^2 \phi) - \frac{\partial}{\partial x^\mu} \bigg(\frac{\partial \mathcal{L}}{\partial v^A_{\phantom{a}\mu}}(j^2 \phi)\bigg) + \frac{\partial^2}{\partial x^\mu \partial x^\nu} \bigg(\frac{\partial \mathcal{L}}{\partial w^A_{\phantom{a}\mu\nu}}(j^2 \phi)\bigg)=0,
\end{equation}

\noindent
where Einstein's summation convention is used. With the Lagrangian density \eqref{eq: Lagrangian density}, these Euler-Lagrange equations take the form \eqref{eq: E-L equations for the simplified Clebsch action functional}. For more information on multisymplectic geometry and jet bundle setting of field theories see \cite{GotayHighOrderJet}, \cite{GIMMSY}, \cite{KouranbaevaShkoller}, and \cite{MarsdenPatrickShkoller}.

For a second-order field theory, the multisymplectic structure is defined on $J^3 Y$ (see \cite{KouranbaevaShkoller}). Given the Lagrangian density $\mathcal{L}$ one can define the Cartan $2$-form $\Theta_\mathcal{L}$ on $J^3 Y$, in local coordinates given by 

\begin{align}
\label{eq: Cartan n+1 form for field theory}
\Theta_\mathcal{L} &= \bigg( \frac{\partial \mathcal{L}}{\partial v^A_{\phantom{a}\mu}}-D_\nu \bigg( \frac{\partial \mathcal{L}}{\partial w^A_{\phantom{a}\mu \nu}} \bigg) \bigg) dy^A \wedge dx_\mu + \frac{\partial \mathcal{L}}{\partial w^A_{\phantom{a}\nu \mu}} dv^A_{\phantom{a}\nu} \wedge dx_\mu \nonumber \\
&\quad+ \bigg(\mathcal{L}-\frac{\partial \mathcal{L}}{\partial v^A_{\phantom{a}\mu}} v^A_{\phantom{a}\mu} + D_\nu \bigg( \frac{\partial \mathcal{L}}{\partial w^A_{\phantom{a}\mu \nu}} \bigg)v^A_{\phantom{a}\mu} - \frac{\partial \mathcal{L}}{\partial w^A_{\phantom{a}\nu \mu}} w^A_{\phantom{a}\nu \mu}\bigg) d^2x,
\end{align} 

\noindent
where $d x_\mu = \partial_\mu \righthalfcup d^2x$, i.e., $dx_0=-dx$ and $dx_1=dt$, and the \emph{formal} partial derivative in the direction $x^\nu$ of a function $f: J^2 Y \longrightarrow \mathbb{R}$ is defined in coordinates as

\begin{equation}
\label{eq: Formal partial derivative}
D_\nu f = \frac{\partial f}{\partial x^\nu} + \frac{\partial f}{\partial y^A} v^A_{\phantom{a}\nu} + \frac{\partial f}{\partial v^A_{\phantom{a}\mu}} w^A_{\phantom{a}\mu\nu} + \frac{\partial f}{\partial w^A_{\phantom{a}\sigma \mu}} s^A_{\phantom{a}\sigma\mu\nu}.
\end{equation}

\noindent
For the Lagrangian density \eqref{eq: Lagrangian density}, the Cartan form is

\begin{align}
\label{eq: Cartan form}
\Theta_\mathcal{L} = &-y^3 dy^1 \wedge dx + \bigg(y^3 y^2 - \frac{a}{2}\frac{w^2_{\phantom{a}11}}{v^1_{\phantom{a}1}} \bigg) dy^1 \wedge dt + \bigg(\beta v^2_{\phantom{a}1}  + \frac{a}{2}\frac{w^1_{\phantom{a}11}}{v^1_{\phantom{a}1}} \bigg) dy^2 \wedge dt \nonumber \\
&+ \frac{a}{2}\frac{v^2_{\phantom{a}1}}{v^1_{\phantom{a}1}} dv^1_{\phantom{a}1} \wedge dt + \bigg(\frac{\alpha}{2} (y^2)^2 - \frac{\beta}{2} (v^2_{\phantom{a}1})^2 - \frac{a}{2} \frac{v^2_{\phantom{a}1} w^1_{\phantom{a}11}}{v^1_{\phantom{a}1}} +  \frac{a}{2} w^2_{\phantom{a}11}\bigg) dx \wedge dt.
\end{align}

\noindent
The multisymplectic $3$-form $\Omega_{\mathcal{L}}$ is then defined as the exterior derivative of the Cartan form:  

\begin{align}
\label{eq: Multisymplectic form for field theory}
\Omega_{\mathcal{L}}=d\Theta_\mathcal{L} = &\phantom{+}\,\,dy^1 \wedge dy^3 \wedge dx  \,\,-\,\,  y^3 dy^1 \wedge dy^2 \wedge dt  \,\,-\,\, y^2 dy^1 \wedge dy^3 \wedge dt  \,\,-\,\,  \frac{a}{2}\frac{w^2_{\phantom{a}11}}{(v^1_{\phantom{a}1})^2} dy^1 \wedge dv^1_{\phantom{a}1} \wedge dt \nonumber \\
&+\,\,\frac{a}{2 v^1_{\phantom{a}1}} dy^1 \wedge dw^2_{\phantom{a}11} \wedge dt  \,\,-\,\,  \beta dy^2 \wedge dv^2_{\phantom{a}1} \wedge dt  \,\,+\,\,  \frac{a}{2}\frac{w^1_{\phantom{a}11}}{(v^1_{\phantom{a}1})^2} dy^2 \wedge dv^1_{\phantom{a}1} \wedge dt \nonumber \\
&-\,\,\frac{a}{2 v^1_{\phantom{a}1}} dy^2 \wedge dw^1_{\phantom{a}11} \wedge dt  \,\,-\,\,  \frac{a}{2 v^1_{\phantom{a}1}} dv^1_{\phantom{a}1} \wedge dv^2_{\phantom{a}1} \wedge dt  \,\,+\,\,  \alpha y^2 dy^2 \wedge dx \wedge dt \nonumber \\
&-\,\,\bigg(\beta v^2_{\phantom{a}1} + \frac{a}{2}\frac{w^1_{\phantom{a}11}}{v^1_{\phantom{a}1}} \bigg) dv^2_{\phantom{a}1} \wedge dx \wedge dt  \,\,+\,\, \frac{a}{2}\frac{v^2_{\phantom{a}1} w^1_{\phantom{a}11}}{(v^1_{\phantom{a}1})^2} dv^1_{\phantom{a}1} \wedge dx \wedge dt \nonumber \\
&-\,\, \frac{a}{2}\frac{v^2_{\phantom{a}1}}{v^1_{\phantom{a}1}} dw^1_{\phantom{a}11} \wedge dx \wedge dt  \,\,+\,\, \frac{a}{2} dw^2_{\phantom{a}11} \wedge dx \wedge dt.
\end{align} 

\noindent
Let $\mathcal{P}$ be the set of solutions of the Euler-Lagrange equations, that is, the set of sections $\phi$ satisfying \eqref{eq:HamiltonPrincipleForFields} or \eqref{eq:Euler-Lagrange Equations for Fields}. For a given $\phi \in \mathcal{P}$, let $\mathcal{F}$ be the set of first variations, that is, the set of vector fields $V$ on $Y$ such that $(x,t)\rightarrow \eta^\epsilon_Y\circ \phi(x,t)$ is also a solution, where $\eta^\epsilon_Y$ is the flow of $V$. The multisymplectic form formula for second-order field theories (see \cite{KouranbaevaShkoller}) states that if $\phi \in \mathcal{P}$ then for all $V$ and $W$ in $\mathcal{F}$, 

\begin{equation}
\label{eq:MultisymplecticFormFormula}
\int_{\partial \mathcal{U}} (j^3 \phi)^* \big(j^3 V  \righthalfcup  j^3 W  \righthalfcup  \Omega_{\mathcal{L}}\big) = 0,
\end{equation}

\noindent
where $(j^3 \phi)^*$ denotes the pull-back by the mapping $j^3 \phi$, and $j^3V$ is the third jet prolongation of $V$, that is, the vector field on $J^3 Y$ whose flow is the third jet prolongation of  the flow $\eta^\epsilon_Y$ for $V$, i.e.,

\begin{equation}
\label{eq: First jet prolongation of a vector field intrinsic}
j^3V = \frac{d}{d\epsilon}\bigg|_{\epsilon=0}j^3 \eta^\epsilon_Y.
\end{equation}

\noindent
Consider two arbitrary first variation vector fields $V$, $W$, in the local coordinates $(x^\mu, y^A)$ represented by $(V^\mu(x^\mu, y^A), V^A(x^\mu, y^A))$ and $(W^\mu(x^\mu, y^A), W^A(x^\mu, y^A))$, respectively. Let us work out the form of the formula \eqref{eq: Multisymplectic form for field theory} for $\tau_{XY}$-vertical first variations, i.e., $V^\mu(x^\mu, y^A)=W^\mu(x^\mu, y^A)=0$. Denote the components of $j^3 V$ as $(0,V^A, V^A_{\phantom{a}\mu}, V^A_{\phantom{a}\mu\nu}, V^A_{\phantom{a}\mu\nu\sigma})$, and similarly for $j^3 W$. The multisymplectic form formula then becomes

\begin{equation}
\label{eq: MFF explicit expression}
\int_{\partial \mathcal{U}} -F(x,t)\,dx + G(x,t)\,dt = 0,
\end{equation}

\noindent
with

\begin{align}
\label{eq: F and G in MFF}
F(x,t) = &-W^1 V^3 + W^3 V^1, \nonumber \\
G(x,t) = &-\pi (W^1 V^2 - W^2 V^1) -u (W^1 V^3 - W^3 V^1) -\frac{a}{2}\frac{u_{xx}}{l_x^2} (W^1 V^1_{\phantom{a}1} - W^1_{\phantom{a}1} V^1) \nonumber\\
&+\frac{a}{2 l_x} (W^1 V^2_{\phantom{a}11} - W^2_{\phantom{a}11} V^1) - \beta (W^2 V^2_{\phantom{a}1} - W^2_{\phantom{a}1} V^2)+\frac{a}{2}\frac{l_{xx}}{l_x^2} (W^2 V^1_{\phantom{a}1} - W^1_{\phantom{a}1} V^2) \nonumber \\
&-\frac{a}{2 l_x} (W^2 V^1_{\phantom{a}11} - W^1_{\phantom{a}11} V^2) -\frac{a}{2 l_x} (W^1_{\phantom{a}1} V^2_{\phantom{a}1} - W^2_{\phantom{a}1} V^1_{\phantom{a}1}),
\end{align}

\noindent
where the vector components are evaluated at $j^3 \phi(x,t)$. By applying Stokes' theorem and using the fact that $\mathcal{U}$ is arbitrary, the multisymplectic form formula \eqref{eq: MFF explicit expression} can be rewritten equivalently as the conservation law

\begin{equation}
\label{eq: Conservation of symplecticity}
\frac{\partial}{\partial t} F(x,t) + \frac{\partial}{\partial x} G(x,t) = 0.
\end{equation}

\noindent
This kind of a conservation law was first considered by Bridges \cite{BridgesMultisymplectic}. In Section~\ref{eq: Multisymplectic Hamiltonian PDE formulation} we make a further connection with Bridges' theory and find a multisymplectic PDE form of the Euler-Lagrange equations \eqref{eq: E-L equations for the simplified Clebsch action functional}. 

\subsection{Multisymplectic Hamiltonian PDE formulation}
\label{eq: Multisymplectic Hamiltonian PDE formulation}

Bridges \cite{BridgesMultisymplectic} introduced the notion of multisymplecticity by generalizing the notion of Hamiltonian systems to Partial Differential Equations (PDEs). A multisymplectic structure $(\mathcal{M},\omega,\kappa)$ consists of the phase space $\mathcal{M}=\mathbb{R}^n$, and pre-symplectic 2-forms $\omega$ and $\kappa$, where pre-symplectic means that the 2-forms are closed, but not necessarily nondegenerate. A multisymplectic Hamiltonian system is a PDE of the form

\begin{equation}
\label{eq: Multisymplectic Hamiltonian PDE}
M(z) z_t + K(z) z_x = \nabla H(z),
\end{equation}

\noindent
where $z: X \ni (x,t) \longrightarrow z(x,t) \in \mathcal{M}$ is a function of the spacetime variables $x$ and $t$, $H: \mathcal{M}\longrightarrow \mathbb{R}$ is the Hamiltonian, and $M(z)$, $K(z)$ are $n \times n$ antisymmetric matrices defined by

\begin{equation}
\label{eq: Definition of M and K matrices}
\omega(\overline W, \overline V) \equiv \big\langle M(z) \overline W , \overline V \big\rangle_\mathcal{M}, \qquad\qquad \kappa(\overline W, \overline V) \equiv \big\langle K(z) \overline W , \overline V \big\rangle_\mathcal{M},
\end{equation}

\noindent
where $\overline V$, $\overline W$ are arbitrary vector fields on $\mathcal{M}$, and $\langle \cdot, \cdot \rangle_\mathcal{M}$ is the standard Euclidean inner product on $\mathcal{M}=\mathbb{R}^n$. 

We will use the multisymplectic form formula \eqref{eq: MFF explicit expression} to deduce the multisymplectic Hamiltonian PDE form \eqref{eq: Multisymplectic Hamiltonian PDE} of the Euler-Lagrange equations \eqref{eq: E-L equations for the simplified Clebsch action functional}. We note that for $a>0$ the vector components that appear in \eqref{eq: F and G in MFF} only correspond to the 7 coordinate directions $y^1$, $y^2$, $y^3$, $v^1_{\phantom{a}1}$, $v^2_{\phantom{a}1}$, $w^1_{\phantom{a}11}$, $w^2_{\phantom{a}11}$ on $J^3Y$. We will therefore consider $\mathcal{M}=\mathbb{R}^7$ and denote the coordinates on $\mathcal{M}$ as $(l,u,\pi,\Delta, \Theta, \Xi, \Pi)$. Define the projection map

\begin{equation}
\label{eq: Projection mapping}
\mathbb{F}\mathcal{L}: J^3 Y \ni (x^\mu,y^A, v^A_{\phantom{a}\mu},w^A_{\phantom{a}\mu\nu},s^A_{\phantom{a}\mu\nu\sigma}) \longrightarrow (y^1, y^2, y^3, v^1_{\phantom{a}1}, v^2_{\phantom{a}1}, w^1_{\phantom{a}11}, w^2_{\phantom{a}11}) \in \mathcal{M}.
\end{equation}

\noindent
The suitable entries for the matrices $M(z)$ and $K(z)$ can be read off from \eqref{eq: F and G in MFF} as

\begin{equation}
\label{eq: M and K matrices}
M = 
\begin{pmatrix}
0  & 0 &  1 & 0 & 0 & 0 & 0	\\
0  & 0 &  0 & 0 & 0 & 0 & 0  \\
-1 & 0 &  0 & 0 & 0 & 0 & 0  \\
0  & 0 &  0 & 0 & 0 & 0 & 0  \\
0  & 0 &  0 & 0 & 0 & 0 & 0  \\
0  & 0 &  0 & 0 & 0 & 0 & 0  \\
0  & 0 &  0 & 0 & 0 & 0 & 0
\end{pmatrix},
\qquad\quad
K(z) = 
\begin{pmatrix}
0                                & \pi &  u & \frac{a}{2}\frac{\Pi}{\Delta^2}  & 0     & 0                  & -\frac{a}{2 \Delta}	\\
-\pi                             & 0   &  0 & -\frac{a}{2}\frac{\Xi}{\Delta^2} & \beta & \frac{a}{2 \Delta} & 0  \\
-u                               & 0 &  0 & 0 & 0 & 0 & 0  \\
-\frac{a}{2}\frac{\Pi}{\Delta^2} & \frac{a}{2}\frac{\Xi}{\Delta^2} &  0 & 0 & \frac{a}{2 \Delta} & 0 & 0  \\
0                                & -\beta                          &  0 & -\frac{a}{2 \Delta} & 0 & 0 & 0  \\
0                                & -\frac{a}{2 \Delta}             &  0 & 0 & 0 & 0 & 0  \\
\frac{a}{2 \Delta}               & 0 &  0 & 0 & 0 & 0 & 0
\end{pmatrix}.
\end{equation}

\noindent
With that choice, we have $F(x,t)=\omega(\overline W, \overline V)$ and $G(x,t)=\kappa(\overline W, \overline V)$, where $\overline W = T\mathbb{F}\mathcal{L}\cdot j^3W$ and $\overline V = T\mathbb{F}\mathcal{L}\cdot j^3V$. The Hamiltonian $H$ can be read off from the $dx\wedge dt$ term in \eqref{eq: Cartan form} as

\begin{equation}
\label{eq: Hamiltonian}
H(z) = \frac{\alpha}{2} u^2 - \frac{\beta}{2} \Theta^2 - \frac{a}{2} \frac{\Theta \Xi}{\Delta} +  \frac{a}{2} \Pi.
\end{equation}

\noindent
Below we show that the Euler-Lagrange equations \eqref{eq: E-L equations for the simplified Clebsch action functional} indeed can be given the multisymplectic structure \eqref{eq: Multisymplectic Hamiltonian PDE}.

\begin{thm}
\label{thm: Multisymplectic PDE system}
Suppose $a>0$. Then the Euler-Lagrange equations \eqref{eq: E-L equations for the simplified Clebsch action functional} with the Lagrangian \eqref{eq: Reduced Lagrangian for the family of equations} are equivalent to the multisymplectic Hamiltonian system \eqref{eq: Multisymplectic Hamiltonian PDE} with the matrices \eqref{eq: M and K matrices} and the Hamiltonian \eqref{eq: Hamiltonian}. That is, if $\phi(x,t) = (x,t,l(x,t), u(x,t), \pi(x,t))$ is a solution of \eqref{eq: E-L equations for the simplified Clebsch action functional}, then $z(x,t) = \mathbb{F}\mathcal{L}\circ j^3 \phi(x,t)$ is a solution of \eqref{eq: Multisymplectic Hamiltonian PDE}. Conversely, if $z(x,t)$ is a solution of \eqref{eq: Multisymplectic Hamiltonian PDE}, then $\phi(x,t) =(x,t,z_1(x,t), z_2(x,t), z_3(x,t)) = (x,t,l(x,t), u(x,t), \pi(x,t))$ is a solution of \eqref{eq: E-L equations for the simplified Clebsch action functional}.
\end{thm}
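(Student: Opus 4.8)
The plan is to verify the claimed equivalence componentwise, working directly in the seven slots of $z=(l,u,\pi,\Delta,\Theta,\Xi,\Pi)$. Writing $z=\mathbb{F}\mathcal{L}\circ j^3\phi$ out in coordinates via \eqref{eq: Projection mapping} gives $z=(l,u,\pi,l_x,u_x,l_{xx},u_{xx})$, so that $\Delta=l_x$, $\Theta=u_x$, $\Xi=l_{xx}$, $\Pi=u_{xx}$. First I would record the gradient of the Hamiltonian \eqref{eq: Hamiltonian}, namely $\nabla H=\big(0,\,\alpha u,\,0,\,\tfrac{a}{2}\tfrac{\Theta\Xi}{\Delta^2},\,-\beta\Theta-\tfrac{a}{2}\tfrac{\Xi}{\Delta},\,-\tfrac{a}{2}\tfrac{\Theta}{\Delta},\,\tfrac{a}{2}\big)$, and then compute $Mz_t+Kz_x$ row by row from \eqref{eq: M and K matrices}. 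Since $M$ has only the single nonzero antisymmetric pair $M_{13}=-M_{31}=1$, the term $Mz_t$ contributes a $\pi_t$ to row $1$ and a $-l_t$ to row $3$ and nothing elsewhere; every time derivative in the system therefore sits in exactly these two rows.

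For the forward implication I would substitute the four prolongation identities $\Delta=l_x,\ \Theta=u_x,\ \Xi=l_{xx},\ \Pi=u_{xx}$ into each row. Rows $4$--$7$ then collapse to tautologies: for instance row $7$ reads $\tfrac{a}{2l_x}l_x=\tfrac{a}{2}$ and row $6$ reads $-\tfrac{a}{2l_x}u_x=-\tfrac{a}{2}\tfrac{u_x}{l_x}$, while rows $4$ and $5$ likewise reduce to algebraic identities. Rows $1$, $2$, $3$ are the substantive ones: row $3$ becomes $l_t+ul_x=0$, which is \eqref{eq: E-L equations for the simplified Clebsch action functional 2}; row $2$, after expanding $(z_6)_x=l_{xxx}$ and collecting, becomes $\alpha u-\beta u_{xx}=-\pi l_x+\tfrac{a}{2}\partial_x(l_{xx}/l_x)$, which is \eqref{eq: E-L equations for the simplified Clebsch action functional 1}; and row $1$ becomes $\pi_t+\partial_x(\pi u-\tfrac{a}{2}u_{xx}/l_x)=0$, which is \eqref{eq: E-L equations for the simplified Clebsch action functional 3}. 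Thus $z$ solves \eqref{eq: Multisymplectic Hamiltonian PDE}.

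The converse is where the hypothesis $a>0$ does the work, since $z$ is now an a priori independent $\mathbb{R}^7$-valued map and the prolongation identities must be recovered rather than assumed. Rows $6$ and $7$ deliver two of them immediately: row $7$ is $\tfrac{a}{2z_4}(z_1)_x=\tfrac{a}{2}$ and row $6$ is $-\tfrac{a}{2z_4}(z_2)_x=-\tfrac{a}{2}\tfrac{z_5}{z_4}$, which on dividing by the nonzero factor $a/(2z_4)$ force $z_4=(z_1)_x$ and $z_5=(z_2)_x$. Feeding $z_5=(z_2)_x$ into row $5$ cancels the $\beta$-terms and leaves $(z_4)_x=z_6$, hence $z_6=(z_1)_{xx}$; feeding $z_4=(z_1)_x$ and $z_5=(z_2)_x$ into row $4$ cancels the $z_5z_6$-terms and leaves $(z_5)_x=z_7$, hence $z_7=(z_2)_{xx}$. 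With all four identities in hand, rows $1$--$3$ are exactly the three equations \eqref{eq: E-L equations for the simplified Clebsch action functional}, and setting $\phi=(x,t,z_1,z_2,z_3)$ finishes the claim.

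The computation is elementary but bookkeeping-heavy, so the main obstacle is organizational rather than conceptual: one must keep $\nabla H$, the entries of $K$, and the four prolongation relations straight while correctly expanding nested spatial derivatives such as $(z_6)_x=l_{xxx}$ and $(z_7)_x=u_{xxx}$. The one genuinely delicate point is the order of extraction in the converse direction — rows $6$ and $7$ must be exploited before rows $4$ and $5$ — together with the standing hypotheses $a>0$ and $\Delta=l_x\neq0$ (the latter holding because $l$ is a diffeomorphism of $S^1$), which are precisely what permit the divisions that turn the multisymplectic rows into the prolongation constraints.
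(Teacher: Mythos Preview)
Your proposal is correct and follows essentially the same route as the paper: write out the seven scalar equations of \eqref{eq: Multisymplectic Hamiltonian PDE}, use rows $7$ and $6$ (with $a>0$) to recover $\Delta=l_x$ and $\Theta=u_x$, then rows $5$ and $4$ to recover $\Xi=l_{xx}$ and $\Pi=u_{xx}$, and finally identify rows $1$--$3$ with \eqref{eq: E-L equations for the simplified Clebsch action functional}. The only difference is cosmetic: you treat the forward and converse implications separately and spell out where $a>0$ and $l_x\neq0$ are used, whereas the paper writes out the seven equations once and argues equivalence in a single pass.
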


\begin{proof}
Substituting \eqref{eq: M and K matrices} and \eqref{eq: Hamiltonian} in \eqref{eq: Multisymplectic Hamiltonian PDE} yields the system of equations

\begin{subequations}
\label{eq: Explicit form of the multisymplectic Hamiltonian PDE}
\begin{align}
\label{eq: Explicit form of the multisymplectic Hamiltonian PDE 1}
\pi_t + \pi u_x + u \pi_x + \frac{a}{2}\frac{\Pi}{\Delta^2}\Delta_x - \frac{a}{2 \Delta} \Pi_x &= 0, \\
\label{eq: Explicit form of the multisymplectic Hamiltonian PDE 2}
-\pi l_x - \frac{a}{2}\frac{\Xi}{\Delta^2}\Delta_x + \beta \Theta_x + \frac{a}{2 \Delta} \Xi_x &= \alpha u, \\
\label{eq: Explicit form of the multisymplectic Hamiltonian PDE 3}
-l_t - u l_x &=0, \\
\label{eq: Explicit form of the multisymplectic Hamiltonian PDE 4}
-\frac{a}{2}\frac{\Pi}{\Delta^2}l_x + \frac{a}{2}\frac{\Xi}{\Delta^2}u_x + \frac{a}{2 \Delta} \Theta_x &= \frac{a}{2}\frac{\Theta \Xi}{\Delta^2}, \\
\label{eq: Explicit form of the multisymplectic Hamiltonian PDE 5}
-\beta u_x - \frac{a}{2 \Delta} \Delta_x &= -\beta \Theta - \frac{a}{2}\frac{\Xi}{\Delta}, \\
\label{eq: Explicit form of the multisymplectic Hamiltonian PDE 6}
- \frac{a}{2 \Delta} u_x &= - \frac{a}{2}\frac{\Theta}{\Delta}, \\
\label{eq: Explicit form of the multisymplectic Hamiltonian PDE 7}
\frac{a}{2 \Delta} l_x &= \frac{a}{2}.
\end{align}
\end{subequations}

\noindent
Equation~\eqref{eq: Explicit form of the multisymplectic Hamiltonian PDE 7} implies $\Delta = l_x$ and Equation~\eqref{eq: Explicit form of the multisymplectic Hamiltonian PDE 6} implies $\Theta=u_x$. Then, Equations~\eqref{eq: Explicit form of the multisymplectic Hamiltonian PDE 5} and \eqref{eq: Explicit form of the multisymplectic Hamiltonian PDE 4} imply $\Xi=l_{xx}$ and $\Pi=u_{xx}$, respectively. By substituting these identities in the remaining equations \eqref{eq: Explicit form of the multisymplectic Hamiltonian PDE 1}-\eqref{eq: Explicit form of the multisymplectic Hamiltonian PDE 3}, we obtain a system equivalent to \eqref{eq: E-L equations for the simplified Clebsch action functional}, which completes the proof.\\ 
\end{proof}

Bridges \cite{BridgesMultisymplectic} showed that the conservation of symplecticity law

\begin{equation}
\label{eq: Conservation of symplecticity with omega and kappa}
\frac{\partial}{\partial t} \omega(\overline W, \overline V) + \frac{\partial}{\partial x} \kappa(\overline W, \overline V) = 0
\end{equation}

\noindent
is satisfied for solutions $z(x,t)$ of \eqref{eq: Multisymplectic Hamiltonian PDE}, where $\overline W$, $\overline V$ are arbitrary first variations of $z(x,t)$. This is an equivalent statement of \eqref{eq: Conservation of symplecticity}, since if $W$ and $V$ are first variations for \eqref{eq:Euler-Lagrange Equations for Fields}, then $\overline W = T\mathbb{F}\mathcal{L}\cdot j^3W$ and $\overline V = T\mathbb{F}\mathcal{L}\cdot j^3V$ are first variations for \eqref{eq: Multisymplectic Hamiltonian PDE}.

\paragraph{Remark.} Equations \eqref{eq: Multisymplectic Hamiltonian PDE}, \eqref{eq: M and K matrices}, and \eqref{eq: Hamiltonian} provide a new multisymplectic formulation for the family of equations \eqref{eq: General family of equations} with $a>0$. For $a=0$ several special cases can be obtained. If $\beta >0$, then Equations \eqref{eq: Explicit form of the multisymplectic Hamiltonian PDE 4}, \eqref{eq: Explicit form of the multisymplectic Hamiltonian PDE 6}, and \eqref{eq: Explicit form of the multisymplectic Hamiltonian PDE 7} become trivial, and it is enough to consider the variables $z=(l,u,\pi,\Theta)$. The matrices $M$ and $K$ then take the form

\begin{equation}
\label{eq: M and K matrices for a=0}
M = 
\begin{pmatrix}
0  & 0 &  1 & 0 	\\
0  & 0 &  0 & 0   \\
-1 & 0 &  0 & 0   \\
0  & 0 &  0 & 0
\end{pmatrix},
\qquad\quad
K(z) = 
\begin{pmatrix}
0                                & \pi    &  u & 0	   \\
-\pi                             & 0      &  0 & \beta \\
-u                               & 0      &  0 & 0     \\
0                                & -\beta &  0 & 0
\end{pmatrix},
\end{equation}

\noindent
and the Hamiltonian becomes $H(z) = \frac{\alpha}{2} u^2 - \frac{\beta}{2} \Theta^2$. For $\alpha=\beta=1$ this reproduces the multisymplectic structure for the dispersionless CH equation found in \cite{CotterHolmMultisymplectic}, and for $\alpha=0$, $\beta=1$ we obtain a new multisymplectic formulation of the HS equation with $a=0$. If in addition $\beta=0$, then Equation~\eqref{eq: Explicit form of the multisymplectic Hamiltonian PDE 5} also becomes trivial, and a further simplification is possible: we consider the variables $z=(l,u,\pi)$ with the matrices

\begin{equation}
\label{eq: M and K matrices for a=0 and beta=0}
M = 
\begin{pmatrix}
0  & 0 &  1  	\\
0  & 0 &  0    \\
-1 & 0 &  0 
\end{pmatrix},
\qquad\quad
K(z) = 
\begin{pmatrix}
0                                & \pi    &  u	\\
-\pi                             & 0      &  0  \\
-u                               & 0      &  0 
\end{pmatrix},
\end{equation}

\noindent
and the Hamiltonian $H(z) = \frac{\alpha}{2} u^2$. This final simplification provides a multisymplectic formulation for the inviscid Burgers' equation.

\section{Summary, open problems and opportunities for future work}
\label{sec:Summary}

In this paper, we have introduced a new type of Clebsch representation that extends the momentum map formulation for fluid dynamics introduced in Holm, Kupershmidt \& Levermore \cite{Holm1983,HoKuLe1983} based on the inverse flow map to the case when the group action governing Lagrangian fluid paths includes the Bott $2$-cocycle in equation \eqref{eq: Bott cocycle}. Physically, this means that linear dispersion with third order spatial derivatives can be included, as required for investigating the multisymplectic structures of the Korteweg-de Vries, Camassa-Holm, and Hunter-Saxton equations. Moreover, the multisymplectic form formula was shown to persist and was derived explicitly for this important class of equations, by using our new type of Clebsch representation, identified in equation \eqref{momap} as the momentum map associated with particle relabeling with group actions which include the Bott $2$-cocycle. In addition, symplecticity was found to be conserved in this new class of flows. Consequently, new types of structure-preserving numerics for soliton equations with linear dispersion can now be developed. 

Multisymplectic integrators are methods that preserve a discrete version of the symplectic conservation law \eqref{eq: Conservation of symplecticity with omega and kappa}. There is numerical evidence that these schemes locally conserve energy and momentum remarkably well (see, e.g., \cite{AscherMcLachlan2004}, \cite{AscherMcLachlan2005}, \cite{BridgesReichMultisymplectic}, \cite{CohenMultisymplecticCH}, \cite{MiyatakeCohen2017}, \cite{WeiPeng2008}, \cite{YuShun2007}, \cite{Zhang2016}, \cite{ZhaoQin2000}), which is a much stronger property than merely global conservation over the whole spatial domain (see \cite{McLachlan1993}). Variational integrators are based on discrete variational principles, which provide a natural framework for the discretization of Lagrangian systems (see, e.g., \cite{LewAVI}, \cite{MarsdenPatrickShkoller}, \cite{MarsdenWestVarInt}, \cite{Pavlov}, \cite{SternDesbrun}, \cite{TyranowskiPHD}, \cite{TyranowskiDesbrunRAMVI}). A discrete action functional can be obtained by discretizing the functional \eqref{eq: Simplified Clebsch action functional} on a spacetime mesh. A variational numerical scheme is then derived by extremizing the discrete action with respect to the discrete set of the values of the fields $l$, $u$, and $\pi$. Variational integrators satisfy a discrete version of the multisymplectic form formula \eqref{eq:MultisymplecticFormFormula}, and are therefore multisymplectic. Moreover, in the presence of a symmetry, they satisfy a discrete version of Noether's theorem, as a consequence of which many of the conservation laws of the continuous system persist. These directions will be explored in future work. They are beyond the scope of the present derivation and formulation. 

Furthermore, the new Clebsch momentum map with the Bott $2$-cocycle in equation \eqref{momap} represents an opportunity to extend the approach in \cite{HolmStochasticFluids2015} of using Clebsch variational principles for introducing noise into continuum mechanics. The new Clebsch momentum map \eqref{momap} will enable us to investigate a new type of interplay among nonlinearity and noise that also includes stochastic linear dispersion. This interplay introduces a class of dynamical problems addressing `wobbling' solitons governed by SPDEs with stochastic mass/label transport. Consider a stochastic deformation of the Euler-Poincar\'e equation \eqref{eq: E-P equation} such that the coadjoint action $\text{ad}^*$ is taken with respect to the perturbed Virasoro algebra element $(u+\xi(x)\circ\dot W(t),a+\eta\circ\dot W(t))$ rather than $(u,a)$, where $\dot W(t)$ denotes the white noise, and the prescribed function $\xi(x)$ and element $\eta\in\mathbb{R}$ represent the spatial correlations of the noise in the advection velocity and the intensity of the noise in $a$, respectively. Because of the definition of the coadjoint action \eqref{eq: Coadjoint action}, the perturbation of $a$ does not have any effect on the equation and can be omitted. The stochastic Euler-Poincar\'e equation will therefore take the form

\begin{equation}
\label{eq: Stochastic E-P equation}
{\sf d}\frac{\delta \ell}{\delta (u,a)} + \text{ad}^*_{(u,a)} \frac{\delta \ell}{\delta (u,a)}\,dt + \text{ad}^*_{(\xi,a)} \frac{\delta \ell}{\delta (u,a)}\circ dW(t) = 0,
\end{equation}

\noindent
where $W(t)$ is the standard Wiener process, ${\sf d}$ denotes stochastic differential with respect to the time variable $t$, and $\circ$ denotes Stratonovich time integration. For the Lagrangian \eqref{eq: Reduced Lagrangian for the family of equations}, we obtain a stochastic deformation of the family of equations \eqref{eq: General family of equations} as

\begin{equation}
\label{eq: Stochastic family of equations}
\begin{split}
{\sf d}(\alpha u - \beta u_{xx}) &+ \big[3 \alpha u u_x - \beta(2 u_x u_{xx} + u u_{xxx})+a u_{xxx} \big] \, dt \\
&+ \big[\alpha (2 \xi_x u + \xi u_x) - \beta(2 \xi_x u_{xx} + \xi u_{xxx})+a \xi_{xxx} \big] \circ dW(t) = 0.
\end{split}
\end{equation}

\noindent
This kind of stochastic deformation has been proposed for the dispersionless Camassa-Holm equation ($\alpha=\beta=1$ and $a=0$ in the equation above), electromagnetic field equations, and various fluid dynamics equations (see \cite{CotterHolm2017}, \cite{CrisanFlandoliHolm2017}, \cite{CrisanHolm2017}, \cite{GayBalmazHolm2017}, \cite{HolmStochasticFluids2015}, \cite{HolmUncertainty2017}, \cite{HolmTyranowskiSolitons}). This approach retains many properties of the unperturbed equations, such as the peaked soliton solutions of the Camassa-Holm equation, and the Kelvin circulation theorem for fluid dynamics. Moreover, for certain functional forms of $\xi(x)$, the introduction of this type of coadjoint transport noise can preserve the deterministic isospectral problem, while introducing stochasticity into the evolution equations for the corresponding eigenfunctions. This stochastic process preserves certain aspects of the inverse scattering methods for determining the soliton solutions of SPDEs, as discussed in \cite{CrisanHolm2017}. The results presented in \cite{CrisanFlandoliHolm2017}, \cite{CrisanHolm2017}, \cite{HolmTyranowskiSolitons} suggest that for smooth initial conditions and for a proper class of the correlation functions $\xi(x)$, the solutions of \eqref{eq: Stochastic family of equations} are likely to retain their spatial regularity. For instance, in the case of spatially uniform noise, with $\xi(x)=\gamma=\text{const}$, if $u(x,t)$ is a solution of \eqref{eq: General family of equations}, then $u(x-\gamma W(t), t)$ is a solution of \eqref{eq: Stochastic family of equations}, which can be easily verified by a direct substitution. 

Under this regularity hypothesis, solutions of Equation~\eqref{eq: Stochastic family of equations} are seen to be critical points of the action functional

\begin{equation}
\label{eq: Simplified stochastic Clebsch action functional}
\begin{split}
S[u,l,\pi] = \int_{t_1}^{t_2}\bar \ell(u)\,dt &+ \frac{a}{2} \int_{S^1}\int_{t_1}^{t_2}\bigg( \frac{u_x l_{xx}}{l_x}\,dt + \frac{\xi_x l_{xx}}{l_x}\circ dW(t) \bigg)\,dx \\
&+ \int_{S^1} \int_{t_1}^{t_2} \pi\Big(\circ {\sf d}l+ul_x \,dt+\xi(x) l_x \circ dW(t)\Big)\,dx,
\end{split}
\end{equation}

\noindent
which is a stochastic deformation of \eqref{eq: Simplified Clebsch action functional}, in which the velocity field $u$ in the reconstruction equations \eqref{eq: Reconstruction equations in terms of the inverse map} is replaced with $u+\xi(x)\circ\dot W(t)$. By following the reasoning presented in Section~\ref{eq: Multisymplectic form formula and conservation of symplecticity}, and ignoring the analytical difficulties arising from introducing stochastic integrals, direct calculation shows that a stochastic version of the multisymplectic form formula \eqref{eq:MultisymplecticFormFormula} holds, and the corresponding stochastic conservation of symplecticity law can be written heuristically as

\begin{equation}
\label{eq: Stochastic conservation of symplecticity}
{\sf d}F(x,t) + \frac{\partial}{\partial x} G(x,t)\,dt + \frac{\partial}{\partial x} \overline G(x,t)\circ dW(t) = 0,
\end{equation}

\noindent
where $F(x,t)$ and $G(x,t)$ have been defined in \eqref{eq: F and G in MFF}, and the function $\overline G(x,t)$ is given by

\begin{equation}
\label{eq: G bar in the stochastic conservation law}
\overline G(x,t) =  -\xi(x) (W^1 V^3 - W^3 V^1) -\frac{a}{2}\frac{\xi_{xx}(x)}{l_x^2} (W^1 V^1_{\phantom{a}1} - W^1_{\phantom{a}1} V^1).
\end{equation}

\noindent
Rigorous derivations and proofs of all these heuristic formulas for the effects of introducing noise this way will be subjects of future work. Of course, it would also be of interest to construct the corresponding stochastic variational and multisymplectic integrators for these investigations in future work. 

Finally, it is worth pointing out that Equation~\eqref{eq: Stochastic family of equations} can be put into Lie-Poisson Hamiltonian form as coadjoint motion,

\begin{equation}
\label{eq: Hamiltonian stochastic PDE form}
{\sf d}m = -J \left( \frac{\delta h}{\delta m} \,dt + \frac{\delta \overline h}{\delta m} \circ dW(t)\right),
\end{equation}

\noindent
where $m = \alpha u - \beta u_{xx}$ is the momentum map, $J= \partial_x m + m \partial_x + a \partial_{xxx}$ is the Lie-Poisson Hamiltonian operator, and the Hamiltonians are $h(m)=\frac{1}{2}\int_{S^1} ({\alpha}u^2+{\beta}u_x^2)\,dx$ and $\overline h(m)=\int_{S^1} \xi(x) \,m\,dx$, for which ${\delta h}/{\delta m}=u$ and ${\delta \overline h}/{\delta m} = \xi(x)$, respectively. We remark that the stochastic KdV form of Equation~\eqref{eq: Stochastic family of equations} with $\alpha=1$ and $\beta=0$, expressed here as a member of the class of stochastic Hamiltonian PDEs in \eqref{eq: Hamiltonian stochastic PDE form}, has also appeared in \cite{ArnaudonPHD}, as  

\begin{equation}
\label{eq: stochastic KdV}
{\sf d}u = -(\partial_x u + u \partial_x + a \partial_{xxx})\big(u \,dt + \xi(x) \circ dW(t)\big)\,.
\end{equation}

\noindent
This new form of the stochastic KdV equation reveals that the class of stochastic Hamiltonian PDEs in \eqref{eq: Hamiltonian stochastic PDE form} involves the interplay between stochastic nonlinear transport and stochastic linear dispersion. The investigation of the dynamical effects arising from these two quite different stochastic mechanisms in the contexts of the KdV and CH equations will be yet another subject of future work.

\section*{Acknowledgments}

We are very grateful to Alexis Arnaudon and Nader Ganaba for many useful comments, references and stimulating discussions during the present work. During this work, the authors were partially supported by the European Research Council Advanced Grant 267382 FCCA and the UK EPSRC Grant EP/N023781/1 held by DH.




\end{document}